\newcommand{\Rmnum}[1]{\expandafter\@slowromancap\romannumeral #1@}
\newenvironment{practitionersummary}{%
    
    \begin{abstract}
}{
    \end{abstract}
}
\newenvironment{wordcount}{%
    
    \begin{abstract}
}{
    \end{abstract}
}
\theoremstyle{plain}
\newtheorem{theorem}{Theorem}[section]
\newtheorem{proposition}[theorem]{Proposition}
\theoremstyle{definition}
\theoremstyle{remark}
\begin{document}


\title{Optimizing View Change for Byzantine Fault Tolerance in Parallel Consensus}

\author{
\name{Yifei Xie\textsuperscript{a}\thanks{CONTACT Yifei Xie. Email: yifei.xie@ed.ac.uk. School of Informatics, University of Edinburgh, 10 Crichton St, Edinburgh EH8 9AB, UK.}, Btissam Er-Rahmadi\textsuperscript{b}, Xiao Chen\textsuperscript{c}, Tiejun Ma\textsuperscript{a}, Jane Hillston\textsuperscript{a}}
\affil{\textsuperscript{a}School of Informatics, University of Edinburgh, Edinburgh, UK; 
\textsuperscript{b}Edinburgh Research Center, Huawei Technologies R\&D, Edinburgh, UK; 
\textsuperscript{c}School of Computing and Mathematical Sciences, University of Leicester, Leicester, UK}
}

\maketitle

\begin{abstract}
The parallel Byzantine Fault Tolerant (BFT) protocol is viewed as a promising solution to address the consensus scalability issue of the permissioned blockchain. One of the main challenges in parallel BFT is the view change process that happens when the leader node fails, which can lead to performance bottlenecks. Existing parallel BFT protocols typically rely on passive view change mechanisms with blind leader rotation. Such approaches frequently select unavailable or slow nodes as leaders, resulting in degraded performance. To address these challenges, we propose a View Change Optimization (VCO) model based on mixed integer programming that optimizes leader selection and follower reassignment across parallel committees by considering communication delays and failure scenarios. We applied a decomposition method with efficient subproblems and improved benders‘ cuts to solve the VCO model. Leveraging the results of improved decomposition solution method, we propose an efficient iterative backup leader selection algorithm as views proceed. By performing experiments in Microsoft Azure cloud environments, we demonstrate that the VCO-driven parallel BFT outperforms existing configuration methods under both normal operation and faulty condition. The results show that the VCO model is effective as network size increases, making it a suitable solution for high-performance parallel BFT systems.
\end{abstract}

\begin{practitionersummary}
    Scalability in permissioned blockchains is increasingly achieved through parallel BFT protocols, yet maintaining performance during leader failures remains a significant challenge. Existing systems predominantly employ passive view change mechanisms that rotate leadership based on fixed, blind schedules. This approach is inefficient, as it frequently selects unavailable or slow servers, leading to significantly higher latency and preventing immediate transaction progress. To address this bottleneck, this paper presents a View Change Optimization (VCO) model that minimizes the latency of the view change process while ensuring the safety. Formulated as a mixed integer programming problem, the VCO model moves beyond blind rotation to actively identify optimal leader configurations across parallel committees by node-to-node communication delays and failure probabilities. We implement the solution of the model using an improved Benders decomposition technique and integrate it with an efficient iterative leader replacement algorithm as views proceed. Extensive experiments conducted on Microsoft Azure cloud test bed demonstrate that the VCO-driven parallel BFT significantly outperforms standard parallel consensus mechanisms, particularly as network size increases and faulty scenarios. For practitioners designing high-performance BFT consensus protocol in distributed systems, this research offers a promising solution for minimizing latency and performance improvement for the view change process. Our proposed model prioritizes performance metrics over random rotation for leader selection while strictly maintaining BFT safety. This approach offers a practical improvement over traditional passive view change schemes for parallel BFT consensus.
\end{practitionersummary}

\begin{keywords}
View change; Byzantine fault tolerance; Parallel consensus; Reliable network optimization
\end{keywords}

\section{Introduction}
Byzantine Fault Tolerant (BFT) consensus algorithms, which coordinate communication among distributed servers in the presence of Byzantine failures, have attracted growing attention with the rapid development of blockchain applications \citep{lamport2019byzantine}. These algorithms ensure a consensus even when up to $(\leq \frac{N-1}{3})$ out of $N$ nodes behave maliciously, making them suitable for constructing distributed systems that operate under adversarial or unreliable conditions. In recent years, driven by the rising interest in blockchain technology, BFT algorithms have been adopted across diverse blockchain platforms, supporting applications such as cryptocurrencies \citep{nakamoto2008bitcoin}, supply chain management \citep{dutta2020blockchain}, trade platforms \citep{lin2022blockchain}, and the Internet of Things (IoT) \citep{dorri2017towards}.
As blockchain systems evolve toward higher throughput and lower latency requirements, BFT consensus has gained increasing attention as it performs better than the widely used Proof-of-Work (PoW) \citep{tschorsch2016bitcoin} and Proof-of-Stake (PoS) \citep{gavzi2019proof} mechanisms. However, traditional BFT systems face scalability bottlenecks due to their high communication complexity, i.e., $\mathcal{O}(N^2)$ with respect $N$ participating nodes. To overcome these limitations, parallel consensus frameworks such as ELASTICO \citep{luu2016secure} and OmniLedger \citep{kokoris2018omniledger} have been proposed to improve the throughput by partitioning the network into multiple smaller committees, each capable of executing BFT consensus in parallel.

Leadership change mechanisms in the BFT system have become critical in BFT systems, driven by the increasing frequency of view changes required in modern deployments. This trend stems from three primary factors: 1) failure handling: as applications scale, the incidence of network problems, bursty workloads, and software bugs rises, necessitating more frequent leader replacement \citep{guo2013failure}; 2) performance requirement: in BFT, malicious leaders can deliberately increase latency without triggering timeouts \citep{amir2008byzantine,aiyer2005bar}, which causes severe system performance degradation. To address this, recent protocols monitor the leader’s throughput and initiate a leadership change if performance drops below a predefined threshold \citep{clement2009making,aublin2013rbft}; 3) prevention of monopolies: frequent leadership changes is essential to limit the influence of malicious nodes that might unjustly prioritize colluding clients for financial advantage \citep{kokoris2018omniledger,zhang2020byzantine}. To mitigate this risk, practical blockchains implement view change protocol \citep{aiyer2005bar}, typically changing leadership every few blocks \citep{moniz2020istanbul,giridharan2023beegees} or at fixed time intervals (e.g., 10 mins) \citep{zhang2021prosecutor}.

As the number of parallel committees grows with the blockchain network scaling, leadership changes occur more frequently since each committee maintains its own leader. Consequently, the efficiency of the view change protocol which governs leader transitions becomes a critical factor in determining overall system performance. Numerous state-of-art parallel BFT algorithms \citep{luu2016secure,kokoris2018omniledger,chen2023parbft,chen2024parallel}, rely on a similar passive view change protocol introduced by PBFT \citep{castro1999practical}. This mechanism is triggered when the current leader node becomes unresponsive or fails to act within the expected time frame --- a view change is initiated to facilitate the election of a new leader. This widely adopted view change protocol follows a strategy to rotate leadership among servers without considering performance issues.

However, this passive view change mechanism is limited by its dependence on a predefined rotation schedule. Since servers must strictly follow a fixed rotation schedule, preventing the system from avoiding unavailable servers during the leader selection process, thereby leading to in weak robustness. For example, if the original leader node fails, the passive view change will assign the scheduled  server. However, if the scheduled server has already crashed, the system can only realize the new leader's unavailability by waiting for timeouts from other servers before the next leader rotation. In addition, the passive view change commonly used can result in inefficiency as it cannot ensure \emph{optimistic responsiveness} \citep{pass2018thunderella}, which requires a non-faulty leader node make immediate progress after being assigned \citep{attiya1994bounds}. Under passive rotation, a slow server may be selected as leader even if it lacks the latest state.
If the server has fallen behind so that it has not yet received the most recent value, it cannot immediately drive consensus. Therefore, the slow server must first synchronize with other servers to update its log, after which it can initiate consensus rounds.
Thus, the passive view change protocol comes at the cost of reduced throughput and increased latency due to a synchronization phase before starting to operate consensus \citep{yin2019hotstuff}.

Some prior research has primarily focused on optimizing parallel BFT protocols under the normal operation case to enhance system performance \citep{chen2023parbft,xie2022stochastic,xie2025stochastic} by mathematical programming models. In \citep{chen2023parbft}, the authors formulated a bi-level mixed-integer linear programming (BL-MILP) model to determine the optimal committee configuration that maximizes throughput. In \cite{xie2025stochastic}, the authors further considered the uncertainty of network conditions and proposed a stochastic programming (SP) model to capture random communication delays and failure rate to optimally configure the committee membership. However, the optimization of the view change process in parallel BFT systems has largely been overlooked, despite its critical role in maintaining system liveness under leader failures. While protocols like Raft \citep{ongaro2014search} enable followers to elect new leaders, and PrestigeBFT \citep{zhang2024prestigebft} introduces a reputation-based scheme to actively trigger view changes when the leader’s performance degrades, these mechanisms are designed for single-committee BFT systems. single-committee BFT systems. To the best of our knowledge, no prior work has investigated the optimization of view-change mechanism in parallel BFT framework.

In order to improve the efficiency of view changes in parallel BFT systems, we propose a View Change Optimization (VCO) model that formulates the leader re-election process as a mixed integer programming (MIP) optimization problem. The VCO model considers communication performance efficiency and fault tolerance to determine the optimal leader configuration across multiple committees. By accounting for node-to-node delay and failure rate, the proposed model minimizes view-change latency and reduces the risk of selecting unreliable nodes. We applied an improved Benders decomposition technique with efficient subproblems and improved Benders' cuts to solve the optimization problem. Then we propose an efficient iterative algorithm for leader replacement as views proceed. Furthermore, we integrate the VCO model with a leader-based parallel BFT algorithm and conduct experiments to evaluate the performance in a real-world cloud network. The experimental results demonstrate that the VCO outperforms non-optimized parallel BFT, with performance improvement  of 23.8\% in normal operation and 45.3\% under faulty condition.

\section{Literature Review} \label{sec:lr}

\subsection{Leader-based BFT in parallel consensus} \label{sec:lr-parbft}
To address scalability and performance challenges, researchers have explored solutions that involve deploying multiple consensus committees. This approach splits the processing of transactions among smaller groups of nodes which can operate consensus in parallel.
Notable examples of such solutions include ELASTICO \citep{luu2016secure}, which employs a hierarchical structure supported by a top-layer committee and several sub-committees, and Omniledger \citep{kokoris2018omniledger}, which utilizes bias-resistant public randomness to generate committees. Additionally, Rapidchain \citep{zamani2018rapidchain} is the first sharding-based public blockchain protocol primarily focused on developing an optimal intra-committee consensus algorithm. ParBFT \citep{chen2023parbft} integrates a sharding scheme with a multi-signature technique to partition the entire consensus network into several committees, allowing transactions to be executed in parallel. Other multi-committee consensus solutions, such as \citep{al2018chainspace,amiri2021sharper}, have been proposed to enhance performance through consensus parallelism.

Most parallel consensus protocols rely on classical BFT for the intra-committee consensus protocol. BFT protocols can broadly be classified into leader-based and leaderless categories. Leaderless BFT algorithms such as \cite{miller2016honey,crain2018dbft,duan2018beat} do not use a designated leader for protocol operation, thereby reducing the risk of single points of failure and server bottlenecks. Although these algorithms are more robust and avoid the leadership changes, they often incur high message complexity and time costs. 
In contrast, leader-based BFT algorithms are favored in practical applications due to their high performance. For example, Hyperledger Fabric \citep{androulaki2018hyperledger} implements PBFT as its consensus protocol, while R3 Corda \citep{hearn2016corda} uses BFT-SMaRt \citep{bessani2014state}. 

To illustrate, we briefly review PBFT \citep{castro1999practical}, a classical leader-based BFT consensus protocol which is commonly used as an intra-committee consensus in sharded blockchains \citep{wang2019sok}. PBFT can tolerate up to one-third Byzantine faults. Its consensus procedures involve a designated leader node determining the order of clients' requests and forwarding them to the follower nodes. Together, all nodes execute a three-phase agreement protocol (pre-prepare, prepare, commit) to reach a consensus on the order of requests. Each node processes every request and sends a response to the respective client. 

\subsection{Committee leadership and view change} \label{sec:lr-vc}

In intra-committee consensus, BFT operation is led by a  committee leader who coordinates the ordering process to reach consensus. When the consensus procedure identifies that the current leader node is faulty, the other nodes trigger a view-change protocol to select a new leader. This mechanism is crucial for maintaining the integrity and reliability of the consensus process, ensuring that a single leader cannot persistently disrupt the system.
In the passive view change introduced by PBFT, servers follow a predefined schedule to rotate leaders. Such practical applications as Aardvark \citep{clement2009making} implement regular view changes whenever a leader's performance degrades beyond a certain threshold, ensuring that no single leader can slow down the system indefinitely. Similarly, HotStuff \citep{yin2019hotstuff} proposes rotating views for each request, aiming to distribute leadership responsibilities more evenly and reduce the risk of prolonged disruptions by any single leader. However, the frequent passive view changes in these protocols can lead to the recurrent selection of faulty leaders, thus having a negative impact on performance. This issue underscores the need for a more appropriate approach to leader selection and rotation.

Raft's \citep{ongaro2014search} leader election mechanism allows servers to actively campaign for leadership when they detect a leader’s failure and vote for an up-to-date node to become a new leader. Consequently, it can prevent unavailable and slow servers from becoming leaders. PrestigeBFT \citep{zhang2024prestigebft} integrates an active view-change protocol with a reputation mechanism in a leader-based BFT consensus algorithm. This innovative approach ensures that only reliable servers can be selected as leaders based on their historical behavior and performance. 
The reputation-driven view-change process in PrestigeBFT improves overall system performance and fault tolerance by dynamically selecting leaders who have demonstrated reliability. This method addresses limitations of traditional passive protocols, which often overlook the varying reliability of different servers. Both Raft and PrestigeBFT focus on optimizing  the view change process within single-committee BFT systems.

Building on the insights and addressing the limitations identified in prior studies on view change mechanism, our work aims to optimize the view change protocol for parallel BFT systems. This involves developing mechanisms that can efficiently manage view change process for multiple committees operating in parallel. 

\subsection{Reliable network optimization under failure} \label{sec:lr-reno}
In network optimization, reliable network design under failures has received considerable attention, particularly in the context of facility location problems, where facilities are placed to minimize transportation costs while accounting for potential disruptions. Several studies have addressed this issue by developing models that incorporate failure scenarios into the strategic planning of hub-based networks, which is closely relevant to optimizing view change in BFT for parallel consensus

In the model of \cite{snyder2005reliability}, the failure of multiple facilities is considered. Each costumer is assigned a primary facility and a set of ordered backup facilities to be chosen if the prior facilities fail. Their formulation is solved using Lagrangian relaxation within a branch-and-bound method. In \cite{cui2010reliable}, the authors extend this model by allowing location specific rather than homogeneous failure probabilities. They propose a continuous approximation heuristic that omits detailed location and assignment decisions to obtain near-optimal solutions efficiently
In \cite{an2015reliable}, the authors propose a nonlinear mixed integer programming model that jointly determines backup hub selection and alternative routing to proactively address hub disruptions. Building on this, \cite{rostami2018reliable} introduce a two-stage formulation for the single allocation hub location problem that considers a set of possible hub breakdown scenarios. By applying a decomposition strategy, they preserve a structure similar to the classical single-allocation hub location problem, which enables efficient solution through linearization techniques.   \cite{tran2017reliable} assume that multiple hub can simultaneously fail. For each hub a sequence of backup hubs is determined, where a backup hub takes over all flow from the original hub. \cite{blanco2023hub} shift the focus from hubs to the reliability of interhub links, proposing models that construct protected networks with alternative routing paths in case of link failures. This emphasis on protecting the connections between hubs rather than the hubs themselves complements the node-focused strategies discussed in the other papers.

These studies underscore the critical importance of preemptively addressing failure scenarios within network design, offering valuable insights for designing optimization model under leader failures in parallel consensus network.  By focusing on the optimal selection of backup nodes for the leader, our work aims to enhance the efficiency and reliability of the view change protocol in parallel BFT consensus.

\section{Preliminaries} \label{sec:pre}

\subsection{Parallel Byzantine consensus network} \label{sec:parbyz}
In this paper, we develop an optimization model for the view change configuration of the parallel BFT consensus networks. 
Performance in parallel BFT consensus algorithm can be improved by the use of a hierarchical structure, splitting nodes into an upper layer and a lower layer \citep{amir2008steward,gupta2020resilientdb,thai2019hierarchical}. Introducing a hierarchical structure which consists of a verification committee above multiple consensus committees, separates global ordering from local transaction processing. The verification layer handles cross‐shard coordination and enforces the necessary total order on transactions among a set of nodes, while nodes in each consensus committee process their own transactions. This hierarchical structure has been shown to significantly improve scalability and performance compared to flat, single-committee consensus \citep{chen2024parallel}.
Based on these advancements, our research builds on the hierarchical sharded blockchains with the BFT intra-committee consensus protocol and seeks to improve performance further by optimizing the parallel committee configuration when view change occurs. 

We consider a node set in a parallel Byzantine consensus network. More specifically, we consider that  is divided as follows:
1) node set of a single verification committee, denoted $\mathcal{S}_v$ which is made up of $n_v$ reliable nodes, which perform the signature verification process. The nodes in $\mathcal{S}_v$ is selected among all the system network nodes based on their  reliability, network connectivity and capacity.
2) node set of $K$ parallel consensus committees, denoted as $\mathcal{N}$ which is made up of the remaining $n$ nodes in the system.

The node set of ${k}$th consensus committee is denoted $\mathcal{N}_k$, ${k} \in \mathcal{K}$ where $\mathcal{K}$ is the set of consensus committees.
We note $\mathcal{N} = \bigcup\limits_{k=1}^{k={K}} \mathcal{N}_k$. Thus, the whole node set is partitioned into $K+1$ disjoint committees  ($ \mathcal{S}_v  \bigcup\limits_{k=1}^{k={K}} \mathcal{N}_k $) and includes $ n_v + n$ nodes.

\subsection{View change protocol}

In Parallel BFT consensus, a view change is executed within each committee which can be triggered by one of the following conditions: 1) an expired timeout, indicating that no response has been received from the leader within a preset time; 2) a faulty operation detected by a follower within the leader's committee; or 3) a faulty operation identified by the verification committee, which will then broadcast a view-change request to the corresponding consensus committee \citep{castro1999practical,chen2023parbft,zhang2024prestigebft}. When any of these conditions are met, the leader node of view $v$ is considered failed, the algorithm proceeds to view $v+1$ and selects a new leader node. In the following, we describe the view-change procedure from the perspective of an individual committee, using the protocol in \citep{castro2002practical} as a representative example. 
As shown in Fig. \ref{fig:vc}, this process involves the committee reconfiguration of follower nodes from the detection of a faulty leader node to the new selected leader node.

\begin{figure}[h]
    \centering
    {
    \includegraphics[scale=0.35]{ 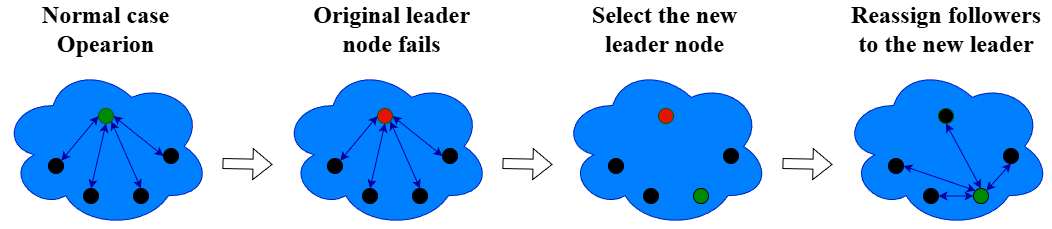} 
    \caption{View change protocol phases under leader node failure.}
    \label{fig:vc}}
\end{figure}

A view change process has three phases (shown in Fig. \ref{fig:vc-phases}). Upon confirming a leader failure based on any of the based on any of the aforementioned triggers, the follower node considers the leader node failed in view $v$ and prepares to enter view $v+1$ by starting the view change phase. It broadcasts a \textsc{view-change} message, where $\mathrm{newView} = v + 1$. After receiving a \textsc{view-change} message, the follower node sends a \textsc{view-change-ack} to the new leader node of view $v+1$. After receiving at least $\frac{2N-5}{3}$ \textsc{view-change-ack} messages that acknowledge follower's \textsc{vie-change} message, the new leader node consolidates the system state by reconciling checkpoints and carrying over uncommitted requests from the previous view. Finally, followers validate the correctness of this constructed state. If the new leader's proposal is deemed legitimate, the view-change protocol terminates and normal operation resumes. Otherwise, the system initiates a further leadership rotation and move to a new view $v+2$.

\begin{figure}[h]
    \centering
    {
    \includegraphics[scale=0.5]{ 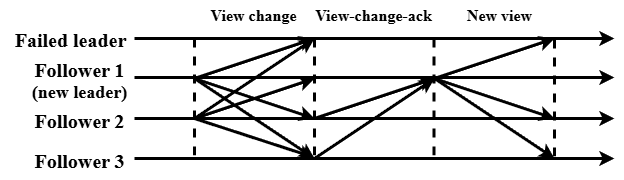} 
    \caption{View change protocol phases under leader node failure.}
    \label{fig:vc-phases}}
\end{figure}


\subsection{Problem definition and assumptions}

Throughput is commonly used as a performance metric to evaluate the performance of processing systems \citep{coccoli2002performance}.
Given an observation period, transaction throughput is computed as the ratio between the number of successfully committed transactions and the time period. Consequently, obtaining higher throughput is equivalent to successfully committing a higher number of transactions during the observation period. If we view the system from a single transaction point of view, the time spent to successfully commit this transaction should be as small as possible. 
Our goal is to minimize the transaction time by optimizing the configuration of the parallel consensus network.

Consider a graph $G=(\mathcal{N}, \mathcal{E})$, where the node set $\mathcal{N}$ represents a given set of the nodes in a parallel consensus network to be configured and the edge set $\mathcal{E}$ represents the potential communication links between pairs of nodes. Any pair of network nodes can be connected through potential communication links.
Let $d_{ij}>0$ denote the delay for  communication link $(i,j) \in \mathcal{E}$ between node $i$ and node $j$.

In the remainder of this paper we work under the following assumptions:
\begin{itemize}
    \item In constructing the optimization model, we disregard the time taken by each node to process requests, as this time is negligible compared to the communication delays. Thus in our optimization model we focus on  whether each pair of nodes is connected or not, and the communication times as these have most impact on the partitioning. 
    \item We focus on optimizing the long-term average performance of the consensus network. Since the duration of the normal operations typically substantially exceeds that of the transient view-change phases, the system's overall throughput is dominated by its performance in the normal case. Consequently, our optimization aims to minimize transaction operation latency within views, rather than the latency between views. In this model, the view-change duration is treated as a fixed protocol overhead and is therefore excluded from the objective function calculations.
    \item In a hierarchical parallel BFT consensus network, the upper-level verification committee typically requires significant computational power as it handles verification for multiple parallel consensus committees. Therefore, the nodes for the verification committee are usually designated rather than determined by our partition optimization model. It is important to note that our optimization model can still address the partitioning configuration issues of a flat-structured parallel consensus network by simply omitting the portion influenced by the verification committee in the objective function. 
\end{itemize}

\section{View change optimization model}

In this section, we detail our VCO model to achieve performance improvement of parallel BFT. The purpose of our optimization model is threefold:  \romannumeral1) the optimal committee configuration which decides if nodes are connected or not under the normal case replication protocol, \romannumeral2) identify the nodes which will serve as leaders of these committees, and \romannumeral3) optimally decide the backup node for the current leader node in each committee.

\begin{table}[!htb]
\centering
\caption{Notation list of the VCO model.} \label{tab: notations}
\begin{tabular}{l|l}
\cline{1-2}
Notation  & Description  \\
\cline{1-2}
 $\mathcal{N}$ & Set of nodes to be configured in consensus committees: $n$ nodes. \\
 $\mathcal{S}_{v}$ & Set of nodes in the verification committee.\\
 $d_{ij}$ & The communication delay from node $i$ to node $j$.\\
 $d_{iv}$/$d_{vi}$ & The communication delays from  the consensus committee led by   \\
             & node $i$ to the verification committee and vice versa, respectively. \\    
 $f_i$ & Failure probability of node $i$. \\
 $f_{\min}$ & Minimum number of tolerated faulty nodes per consensus \\ & committee.\\
 $x_{ij}$ \small{$({i,j} \in \mathcal{N})$} \; & A binary decision variable equal to 1 if node $i$ is the leader of  \\
 &  node $j$, equal to 0 if not.  \\
 $y_i^k$ \small{$({i,j} \in \mathcal{N})$} & A binary decision variable equal to 1 if node $k$ is a backup node \\
 & for leader $i$, equal to 0 if not. \\
 $z_{ij}^k$ \small{$({i,j} \in \mathcal{N})$} & A binary decision variable equal to 1 if  node $j$ is reassigned \\ & from leader node $i$ to 
  backup leader node $k$, equal to 0 if not. \\
\cline{1-2}
\end{tabular}
\end{table}

\subsection{Decision variables} \label{sec:decvar}

We introduce decision variables to configure the committee structure optimally. These variables are of two types. Firstly, corresponding to the normal case protocol, we define variables to determine which nodes are selected to serve as leader nodes for the follower nodes.
\begin{align*}
&x_{ij}=
\biggl\{
    \begin{aligned}
    1 \quad & \text{if node $i$ serves as a leader node of follower node $j$} \\
    0 \quad & \text{otherwise}
    \end{aligned}
\end{align*}
In the case $i=j$, $x_{ii}=1$ means $i$ is a leader and $x_{ii}=0$ means it is a follower.

These decision variables address three key aspects of the consensus network configuration:
\begin{enumerate}[label=\arabic*)]
\item Total number of committees: The total number of committees is determined by the number of leader nodes. Since each committee has a unique leader, the number of committees is represented by the sum of binary variables $p = \sum_i x_{ii}$. 

\item Selection of the leader nodes for each consensus committee: This involves identifying which nodes will act as leaders within the network. The leader nodes are responsible for coordinating communication with the verification committee and communicating with follower nodes in the committee.

\item Determination of which follower nodes are allocated to the leader nodes: Once the leader nodes are selected, the remaining nodes will be allocated as followers to these leaders. 
\end{enumerate}
During the view change, we introduce decision variables to determine the backup node for each leader node within their respective committees. In the event of a leader node failure, the backup node will be selected and coordinate the communication with other follower nodes and the verification committee.

\subsection{Objective function}  \label{sec:objfun}

Under both the normal case replication protocol and the view-change protocol, our objective is to minimize latency. Within the configuration of the consensus committee, the communication delay within the committee and the delay between the committee leader and the verification committee constitute the primary components of transaction delay. For any nodes $i$ and $j$, we define $d_{ij}$ as the delay between node $i$ and node $j$, and $d_{iv}$ as the delay between node $i$ and the verification committee. The approach of \cite{chen2023parbft} minimizes the latency of one successful transaction. In their formulation, they minimize the total latency for each phase; in each phase the latencies are computed by the the longest delay for participating nodes. Their work focuses on optimizing the performance of their specific protocol so they minimize the latency as the sum of each phase. In our model, we aim to optimize the performance under all the types of parallel BFT protocols. Therefore, instead of explicitly expressing the objective function as the sum of delays for each phase, we directly minimize the total delay between connected nodes within the consensus committee and between the consensus leader and the verification committee. Thus we have: 

\begin{align*}
    \min & \quad \sum_{i,j\in \mathcal{N}} d_{ij} x_{ij} + \sum_{i \in \mathcal{N}} d_{iv} x_{ii}
\end{align*}


\subsection{Optimization under normal case}
\label{sec:onc}
The normal-case parallel BFT consensus optimization problem involves selecting leader nodes for committees and assigning the remaining follower nodes to these leaders.
The optimization problem under normal case can be formulated as:

\begin{align}
    \min & \quad \sum_{i,j\in \mathcal{N}} d_{ij} x_{ij} + \sum_{i \in \mathcal{N}} d_{iv} x_{ii} \label{nc-obj1} \\
    \text{s.t.}
    & \quad j \in \mathcal{N}: \  \sum_{i=1}^{n} x_{ij} = 1 \label{nc-constr1} \\
    & \quad i,j \in \mathcal{N}: \  x_{ij} \leq x_{ii} \label{nc-constr2} \\
    & \quad i,j \in \mathcal{N}: \  \sum_{j=1}^{n} x_{ij} + 1 \geq 3 f_{\min} + 1 \label{nc-constr3} \\
    & \quad i,j \in \mathcal{N}: \  x_{ij} = \{0,1\} \label{nc-constr4}
\end{align}

Constraints (\ref{nc-constr1}) ensure that each follower node is assigned to exactly one leader node. This is crucial that every node has one and only one designated leader, preventing any follower from being unassigned or assigned to multiple leaders. 

Constraints (\ref{nc-constr2}) ensure that a node can only be selected as a leader if it is assigned to itself. The left side of constraint~(\ref{nc-constr2}) can take the value of 1 only when $x_{ii} = 1$. 
When $x_{ii} =0$, the node $i$ serves as a follower of a consensus committee, hence it cannot be a leader. Constraint~(\ref{nc-constr2}) prevents any follower from communicating with nodes who are not their leader (i.e.\ leaders and followers of other committees).

Constraints (\ref{nc-constr3}) ensure that each leader node has at least \( 3 f_{\min}  \) followers. This constraint is derived from the principles of Byzantine Fault Tolerance, which require a certain number of nodes to achieve consensus despite potential faults or malicious actors. By maintaining this minimum number of followers, the network can withstand up to \( f_{\min} \) faulty nodes, thus ensuring the BFT safety requirements.

\subsection{Optimization model for view change}
\label{sec:VCOP}
The view change protocol addresses the problem of a faulty leader node to ensure the \emph{liveness} property, which guarantees that the system continues to process transactions and reach consensus despite any leader failures. In the context of parallel BFT, which employs multiple committees to achieve parallel consensus, it is essential to optimize the view change mechanism for this parallel network. We separately consider the scenario where the leader node of each parallel committee encounters a failure. The probability of a view change for a specific consensus committee is equivalent to the failure rate of its corresponding leader node.

To address leader node failures, we define a finite set \( S \) representing all possible scenarios of leader node failure. Since the simultaneous failure of several leader nodes is highly improbable, we assume that at most one leader node fails at a time. This leads to a finite set \( S = \{S_0, S_1, \ldots, S_n\} \), which consists of \( n+1 \) scenarios. Here, \( S_0 \) represents the normal case scenario where all leader nodes are operational, while \( S_1, \ldots, S_n \) describe scenarios where each specific leader node fails individually. The probability of each scenario \( S_i \) occurring is denoted by \( p_{S_i} \) for \( i \geq 1 \).

For each realized failure scenario, a preselected backup leader node replaces the failed leader node, meaning the corresponding follower nodes are re-allocated to the backup leader node. Then we need to make two types of decisions. First, in the normal scenario \( S_0 \), a subset of nodes is selected as leader nodes, and all other nodes are allocated as followers to exactly one leader node. 
Second, for each failure scenario \( \{S_i \mid 1 \leq i \leq N \} \), a response must be formulated to address the failure of leader node \( i \). This involves selecting a backup leader node and reassigning all follower nodes of the failed leader node to the backup leader node. 


In order to model the view change decision, we define the backup nodes selection and reassignment decision variables in the following:
\begin{align*}
&y_i^k=
\begin{cases}
    1 & \text{if node $k$ is a backup node for leader $i$}, \\
    0 & \text{otherwise}.
\end{cases}\\
&z_{ij}^k=
\begin{cases}
    1 & \text{if follower node $j$ is reassigned from leader node $i$ to backup node $k$}, \\
    0 & \text{otherwise}.
\end{cases}
\end{align*}
For a given failure scenario \( S_i \), we introduce two network latency functions: \( f_{S_i} \) and \( g_{S_i} \). The function \( f_{S_i}^k \) represents the additional transaction processing latency that arises due to the view change, specifically the reassignment to the leader node \( k \) following a failure. This additional latency is computed as the sum of the network delay between the backup leader node \( k \) and its follower nodes \( j \), as well as the network delay between the backup leader node \( k \) and the verification committee. Mathematically, this can be expressed as:
\begin{align}
     f_{S_i}^k = d_{kv} y_i^k + \sum_{j \in \mathcal{N}} d_{kj} z_{ij}^k \label{vc-obj-f}
\end{align}

Conversely, some transaction delays are mitigated in scenario \( S_i \) due to the failure of leader node \( i \). These reductions in delay consist of the transaction delays that would have occurred between the failed leader node \( i \) and its follower nodes, as well as between the failed leader node \( i \) and the verification committee. This can be formulated as:
\begin{align}
      g_{S_i} = d_{iv} x_{ii} + \sum_{j \in \mathcal{N}} d_{ij} x_{ij} \label{vc-obj-g}
\end{align}

Here, \( g_{S_i} \) captures the saved latency, or rather the reduction  in network delay, that results from the failure of leader node \( i \). Specifically, \( d_{iv} x_{ii} \) represents the delay reduction between the failed leader node \( i \) and the verification committee, while \( \sum_{j} d_{ij} x_{ij} \) captures the delay reduction between the failed leader node \( i \) and its follower nodes. Due to the leader node switch, we are calculating the difference in total delay for the failure case scenario. Therefore, we need to subtract the original delays between the former leader node and the verification committee, as well as with the follower nodes. 

The overall latency for processing transactions after a view change can then be formulated by integrating the probabilities of each failure scenario. This provides a weighted sum of the additional latencies incurred and the latencies mitigated due to the failure of leader nodes. The formula is expressed as:
\begin{align}
    Q(x) = \sum_{i=1}^n p_{S_i} Q_{S_i}(x) = \sum_{i=1}^n p_{S_i} \Big( \sum_{k \in \mathcal{N}} f_{S_i}^k - g_{S_i} \Big) \label{vc-obj-q}
\end{align}

Here, \( p_{S_i} \) represents the probability of scenario \( S_i \) occurring. The term \( Q_{S_i}(x) \) encapsulates the net latency for scenario \( S_i \), which is the difference between the additional latency \( \sum_{k} f_{S_i}^k \) due to reassignment of leader nodes and the latency reduction \( g_{S_i} \) due to the failure of the original leader node. By summing these weighted net latencies across all possible scenarios, we obtain \( Q(x) \) as the expected latency for opeations after a view change.
\begin{align} 
    \min & \quad \sum_{i \in \mathcal{N}} d_{iv} x_{ii} + \sum_{i,j \in \mathcal{N}} d_{ij} x_{ij} + Q(x) \label{vc-obj}\\
    \text{s.t.} & \quad i \in \mathcal{N}: \  \sum_{k \in \mathcal{N}, k \neq i} y_i^k = x_{ii} \label{vc-constr1}\\
    & \quad i,k \in \mathcal{N}: \  y_i^k \leq x_{ik} \label{vc-constr2}\\
    & \quad i,j,k \in \mathcal{N}: \  y_i^k \leq z_{ij}^k \label{vc-constr3}\\
    & \quad i,j \in \mathcal{N}: \  \sum_{k \in \mathcal{N}, k \neq i} z_{ij}^k = x_{ij} \label{vc-constr4}\\
    & \quad j \in \mathcal{N}: \  \sum_{i=1}^{n} x_{ij} = 1 \label{vc-constr5}\\
    & \quad i,j \in \mathcal{N}: \  x_{ij} \leq x_{ii} \label{vc-constr6}\\
    & \quad i,j \in \mathcal{N}: \  \sum_{j=1}^{n} x_{ij} + 1 \geq 3 f_{\min} + 1 \label{vc-constr7}\\
    & \quad i,j,k \in \mathcal{N}: \  x_{ij}, y_i^k, z_{ij}^k \in \{0,1\} \label{vc-constr8}
\end{align}

The objective function (\ref{vc-obj}) minimizes the total network delay. This objective function includes three main components: the sum of the delays between leader nodes and the verification committee (\( \sum_{i \in \mathcal{N}} d_{iv} x_{ii} \)), the delays between leader nodes and their follower nodes (\( \sum_{i,j \in \mathcal{N}} d_{ij} x_{ij} \)), and the expected network latency after a view change (\( Q(x) \)). 

Constraints (\ref{vc-constr1}) ensure that for each failed leader node \(i\), exactly one backup node is chosen. Here, \( y_i^k \) is a binary variable indicating whether node \( k \) is the backup for leader node \( i \). Constraints (\ref{vc-constr1}) guarantees that every leader node \( i \) has a designated backup.

Constraints (\ref{vc-constr2}) enforce that a node \( k \) can only serve as a backup for leader node \( i \) if \( i \) is selected as the leader node of the backup node before the view change. 

Constraints (\ref{vc-constr3}) ensure that a follower node \( j \) can be reallocated to the backup node \( k \) only if node \( k \) is already designated as a backup node for leader node \( i \). They ensure that follower nodes are only reassigned to valid backup nodes, thereby preventing misallocation. 

Constraints (\ref{vc-constr4}) ensure that for every allocation \( x_{ij} \) to the failed leader node \( i \), a corresponding reallocation \( z_{ij}^k \) is chosen during the view change. These constraints ensure that all follower nodes originally assigned to a failed leader node are properly reassigned to backup nodes. 

Constraints (\ref{vc-constr5}) to (\ref{vc-constr7}) are similar to those in the normal case and ensure that each follower node is assigned to exactly one leader node, that a follower node can only be assigned to a valid leader node, and that the system maintains sufficient nodes to tolerate up to \( f_{\min} \) failures, respectively. 

\subsection{Backup node selection as views proceed} \label{sec:bsvp}

The view change process plays an essential role in keeping a parallel BFT system operational when multiple leader failures occur over time. In this subsection, we describe how backup leaders are selected as views proceed and explain how the system transitions from one leader to the next after the view change reconfiguration. The key idea is that once the initial committee configuration \(x\) has been fixed, subsequent leader replacements do not require re-optimizing \(x\). 
Building on the optimization models introduced in Section~\ref{sec:VCOP}, we design an iterative procedure that selects replacement leaders while seeking to minimize the latency added by each view change.

The backup leader selection process operates as views proceed as follows:

\begin{itemize}
    \item[1.] \textbf{View initialization.} The system initializes the decision variables $x$ obtained by solving the view change model. For each view, we then need to optimally obtain the decision variables \(y_i^k\) and the follower-reassignment variables \(z_{ij}^k\). These optimal solution variables will indicate which node should replace a failed leader and how its followers should be reassigned.
    \item[2.] \textbf{Backup node selection by solving VCO model.} The system considers each possible leader-failure scenario \(S_i (i \geq 1, S_i \in S)\) for the node $i$ which $x_{ii}=1$. For a scenario in which leader \(i\) fails, the algorithm solves the model (\ref{eq:rvco-obj})–(\ref{eq:rvco-constr5}) to choose a new backup node \(k\) for leader \(i\). 
    \begin{align}
\min & \quad \sum_{k \in \mathcal{N}} f_{S_i}^k - g_{S_i} \label{eq:rvco-obj}\\
\text{s.t.} & \quad \sum_{k \in \mathcal{N}, k \neq i} y_i^k = x_{ii} \label{eq:rvco-constr1}\\
& \quad  k \in \mathcal{N}, k \neq i: \  y_i^k \le x_{ik} \label{eq:rvco-constr2}\\
& \quad j \in \mathcal{N}: \  \sum_{k \in \mathcal{N}, k \neq i} z_{ij}^k = x_{ij} \label{eq:rvco-constr3}\\
&\quad j, k \in \mathcal{N}, k \neq i: \  y_i^k \le z_{ij}^k  \label{eq:rvco-constr4}\\
& \quad j,k \in \mathcal{N}: \  y_i^k, z_{ij}^k \in \{0,1\} \label{eq:rvco-constr5}
\end{align}
    \item[3.] \textbf{Follower reassignment if the current leader fails.} Once the current leader node \(i\) is identified, a view change is driven and a backup node \(k\) has been replaced with node \(i\) for the next view, all of node \(i\)’s followers are reassigned. This step is determined by optimal decision variable \(z_{ij}^k = 1\) for every follower \(j\) of the model (\ref{eq:rvco-obj})–(\ref{eq:rvco-constr5}). 
    \item[4.] \textbf{Node state update.} The system updates the leader–follower configuration. The selected backup leader \(k\) becomes the active leader, and the configuration is updated by setting \(x_{kk}=1\), clearing the failed leader’s status \(x_{ii}=0\), and updating the follower relationships accordingly. For later failures, the committee configuration \(x\) remains unchanged, and only the backup-selection variables are re-optimized. 
\end{itemize}

This algorithm supports dynamic and efficient backup leader selection as views proceed, which ensures low latency in parallel BFT networks. The corresponding pseudocode is shown below. 

\begin{algorithm}[h]
\caption{Backup leader selection as views proceed}
\label{alg:backup}
\begin{algorithmic}[1]
\For{each view $t$}
    \State Initialize the leader--follower configuration for view $t$:
    \State Set $x_{ii}=1$ and $x_{ij}=1$ if node $i$ is the leader and node $j$ is its follower.
    \Statex \hspace{1.4em}Otherwise set $x_{ii}=0$ and $x_{ij}=0$.
    \For{each leader failure scenario $S_i$}
        \State Identify leader $i$ and its associated followers.
        \State Solve model (\ref{eq:rvco-obj})--(\ref{eq:rvco-constr5}) and obtain the optimal $y_i^k, z_{ij}^k$.
    \EndFor
    \If{leader node $i$ fails in view $t$}
        \For{each follower $j$ of the failed leader $i$}
            \State Reassign follower $j$ to the backup leader $k$ if $z_{ij}^k = 1$.
        \EndFor
        \Statex \hspace{3em}Update system state:
        \Statex \hspace{3em}The backup leader $k$ becomes the new leader.
        \Statex \hspace{3em}Set $x_{ii} \gets 0$, $x_{kk} \gets 1$.
        \Statex \hspace{3em}For each follower $j$, set $x_{kj} \gets 1$ and $x_{ij} \gets 0$.
    \EndIf
    \State Proceed to next view: $t \gets t+1$
\EndFor
\end{algorithmic}
\end{algorithm}

\section{Solution method for view change optimization model}

In this section, we apply a decomposition method to solve the view change optimization problem within parallel BFT network. Our approach leverages Benders decomposition, breaking down the view change network optimization into a master problem and a collection of sub-problems, each corresponding to a distinct scenario \( S_i \), where \( i \geq 1 \). This decomposition is suitable because the subproblems associated with each scenario \( S_i \) (for \( i \geq 1 \)) can be solved independently, given a fixed set of master problem decision variables. 

In the master problem, the decision variables \( x_{ij} \)  indicate the initial configuration of the network, which in turn influences view change process. Once an optimal set of decision variables \( \hat{x} \) is identified through the master problem, the focus shifts to solving the subproblems. Each of these subproblems is solved by determining the optimal backup leader node and reallocation of follower nodes, as captured by the backup variables \( y \) and reallocation variables \( z \). The subproblems are solved independently for each scenario, which allows for a response to different potential fault scenarios.

The master problem is formalized as follows:
\begin{align}
    MP: \quad 
   \min & \quad  \sum_{i \in \mathcal{N}} d_{iv} x_{ii} + \sum_{i,j \in \mathcal{N}}d_{ij} x_{ij} + \theta \label{mp-obj} \\
   \text{s.t.} & \quad  \theta \geq Q(x) \label{mp-constr1} \\
   & \quad (\ref{nc-constr1}) - (\ref{nc-constr4})  \label{mp-constr2}
\end{align}
where \(\theta\) represents the total reconfiguration delay across all subproblems.

Let \( X \) denote the set of all binary vectors associated with the \( x_{ij} \) variables. For a given vector \(\hat{x} \in X\) and the current value of \( \theta \), denoted as \(\hat{\theta}\), at a node in the enumeration tree within the branch-and-bound method for the master problem, and for a given scenario \( S_i \) where \( i \geq 1 \), we consider the following primal subproblem:
\begin{align}
    PS(\hat{x}, S_i): \quad \min &  \quad  Q_{S_i}(\hat{x}) = \sum_k f_{S_i}^k - g_{S_i}  \label{ps-obj} \\
    \text{s.t.} &  \quad \sum_{k \in \mathcal{N}} y_i^k = \hat{x}_{ii} \label{ps-constr1} \\
    &  \quad k \in \mathcal{N}: \  y_i^k \leq \hat{x}_{ik} \label{ps-constr2} \\
    &  \quad j,k \in \mathcal{N}: \  y_i^k \leq z_{ij}^k \label{ps-constr3} \\
    &  \quad  j \in \mathcal{N}: \  \sum_{k \in \mathcal{N}} z_{ij}^k = \hat{x}_{ij} \label{ps-constr4}  \\
    &  \quad  j,k \in \mathcal{N}: \  y_i^k, z_{ij}^k \in \{0,1\}  \label{ps-constr5}
\end{align}

The constraints in this primal subproblem arise from the view change network optimization problem with the allocation vector \( \hat{x} \) fixed. The primal subproblem \( PS(\hat{x},S_i) \) thus identifies the optimal backup leader node to assume the role of the failed leader node \( i \). Note that the feasibility of \( PS(\hat{x},S_i) \) is assured by the constraints (\ref{mp-constr2}) in the master problem. The objective function \( Q_{S_i}(\hat{x}) \) represents the additional delay introduced by reconfiguring all followers from the failed leader node \( i \) to the newly designated backup leader node.

The optimal value of each subproblem, denoted \( Q^*_{S_i}(\hat{x}) \), represents the minimized delay for scenario \( S_i \) given the fixed normal case allocation decision variables \(\hat{x}\). The overall optimal value of view change problem combining all scenarios is then formulated as:
\begin{align}
    Q^*(\hat{x}) = \sum_{i=1}^n p_{S_i} Q_{S_i}^*(\hat{x}) \label{sp-optv}
\end{align}


\subsection{Efficient solution of the subproblems} \label{sec:essp}

To efficiently solve the subproblem \( PS(\hat{x}, S_i) \), we consider an iteration of Benders' decomposition algorithm at a given node of the enumeration tree, with the binary solution \( \hat{x} \). Let \(H = \{ i \in \mathcal{N} : \hat{x}_{ii} = 1 \}\) denote the set of nodes selected as leaders under optimal configuration \(\hat{x}\). In our formulation, each scenario \(S_i\) corresponds to the failure of a specific node \(i \in N\). For scenarios where \(i \notin H\), a failure of node \(i\) does not trigger a view change because \(i\) is not a designated leader. Consequently, the subproblem \(PS(\hat{x}, S_i)\) reduces to a trivial case whose optimal value is zero, since no leader-replacement decisions are required. This observation allows us to only consider subproblems where \( i \in H \).

In cases where \( i \in H \), we can leverage the structure of the subproblem to efficiently determine its optimal value \( Q_{S_i}^*(\hat{x}) \). Define \( N_{S_i} = \{j \in \mathcal{N} : \hat{x}_{ij} = 1\} \) as the set of all follower nodes assigned to a failed leader node \( i \). With this definition, the subproblem \( PS(\hat{x}, S_i) \) can be reformulated as:
\begin{align}
    PS'(\hat{x}, S_i): \quad \min &  \quad  Q_{S_i}(\hat{x}) = \sum_{k \in N_{S_i}} \Big(d_{kv} y_i^k + \sum_{j \in N_{S_i}} d_{kj}z_{ij}^k\Big) - g_{S_i}(\hat{x}) \label{psvar-obj}\\
     \text{s.t.} &  \quad \sum_{k \in N_{S_i}} y_i^k = 1 \label{psvar-constr1} \\
    &  \quad j,k \in N_{S_i}: \ y_i^k \leq z_{ij}^k \label{psvar-constr2} \\
    & \quad j \in N_{S_i}: \ \sum_{k \in N_{S_i}} z_{ij}^k = 1 \label{psvar-constr3} \\
    & \quad  j,k \in N_{S_i}: \  y_i^k, z_{ij}^k \in \{0,1\}  \label{psvar-constr4} 
\end{align}

This reformulation shows that the subproblem \( PS'(\hat{x}, S_i) \) is a variant of the classic 1-median problem \citep{daskin2015p}. The 1-median problem seeks to identify the optimal location for one facility (in this case, the backup leader node) that minimizes the total distance of serving a set of demand points (the follower nodes). It can be solved in an efficient way as explained below.

The efficient solution of this subproblem evaluates the additional network delay incurred for each potential backup node \( k \in N_{S_i} \). This delay is computed as:
\begin{align}
    exd(k) = d_{kv} y_i^k + \sum_{j \in N_{S_i}} d_{kj}z_{ij}^k \label{exobj}
\end{align}

Here, \( exd(k) \) represents the extra delay introduced when node \( k \) is selected as the backup leader for the failed leader node \( i \). The objective is to find the backup node \( k \in N_{S_i} \) that minimizes this additional delay. 
To solve this, we compute \( exd(k) \) for each candidate backup node \( k \) in the set \( N_{S_i} \). The optimal solution is then obtained by selecting the node that yields the minimum value of \( exd(k) \).

\subsection{Improved Benders cuts} \label{sec:ibc}

To enhance the efficiency of Benders decomposition in solving the VCO problem, we propose a modification that leverages the dual information derived from the Linear Programming (LP) relaxation of the subproblems. 
We begin by revisiting the subproblem \( PS(\hat{x}, S_i) \). 
A key observation is that the subproblem \( PS(\hat{x}, S_i) \) possesses an integrality property under certain conditions, allowing us to work with the LP relaxation. This is formalized in the following proposition:

\begin{proposition} \label{integrality}
    For any binary \( \hat{x} \) and any \( i \geq 1 \), the subproblem \( PS(\hat{x}, S_i) \) has the integrality property.
\end{proposition}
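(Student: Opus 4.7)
The plan is to establish the integrality property by solving the LP relaxation of $PS(\hat{x}, S_i)$ explicitly and showing that its optimum is attained at a binary vector. The argument hinges on a tightness observation that eliminates the $z$-variables by pinning them to the $y$-variables, so the feasible region effectively reduces to a simplex whose vertices are integer.

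First I would dispatch the trivial case $\hat{x}_{ii}=0$. Constraint (\ref{ps-constr1}) combined with non-negativity forces every $y_i^k = 0$. The master-problem condition (\ref{nc-constr2}) then yields $\hat{x}_{ij} \le \hat{x}_{ii} = 0$ for every $j$, so (\ref{ps-constr4}) collapses all $z_{ij}^k$ to zero as well. The unique feasible point is the zero vector, which is trivially integer.

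For the main case $\hat{x}_{ii}=1$, I would restrict attention to the follower set $N_{S_i} = \{j : \hat{x}_{ij}=1\}$, because the remaining components of $y$ and $z$ are forced to zero exactly as in the trivial case. The key step is to sum the inequalities (\ref{ps-constr3}) over $k$ for a fixed $j \in N_{S_i}$: this yields $\sum_k y_i^k \le \sum_k z_{ij}^k$, whose two sides equal $\hat{x}_{ii}=1$ and $\hat{x}_{ij}=1$ respectively by (\ref{ps-constr1}) and (\ref{ps-constr4}). Since the individual slacks $z_{ij}^k - y_i^k$ are non-negative and their sum vanishes, each slack must be zero, giving $z_{ij}^k = y_i^k$ for every $j \in N_{S_i}$ and every $k$. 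Substituting this identity into the objective (\ref{ps-obj}) reduces the LP to minimizing the linear functional $\sum_k c_k\, y_i^k - g_{S_i}(\hat{x})$, with $c_k = d_{kv} + \sum_{j \in N_{S_i}} d_{kj}$, over the polytope $\{y : \sum_k y_i^k = 1,\ 0 \le y_i^k \le \hat{x}_{ik}\}$.

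To close the proof I would invoke the standard fact that a linear objective over this simplex-type polytope attains its optimum at an extreme point, and every extreme point is a unit vector supported on a single index $k^\ast$ with $\hat{x}_{ik^\ast}=1$. That produces a binary optimum $y_i^{k^\ast}=1$, $z_{ij}^{k^\ast}=1$ for all $j \in N_{S_i}$, and every other variable zero, which establishes the integrality property. I expect the main obstacle to be the tightness argument forcing $z_{ij}^k = y_i^k$: the proof must make clear that this equality is not an assumption but a consequence of the combined equality and inequality constraints having zero total slack. Once that collapse is in hand, the remainder is routine linear programming, and as a side benefit the argument provides an independent justification for the 1-median reformulation $PS'(\hat{x}, S_i)$ used in Section~\ref{sec:essp}.
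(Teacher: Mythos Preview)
Your proposal is correct and follows essentially the same route as the paper: both arguments collapse $z_{ij}^k$ to $y_i^k$ via the zero-total-slack observation (the paper subtracts (\ref{psvar-constr1}) from (\ref{psvar-constr3}) and pairs with (\ref{psvar-constr2}), you sum (\ref{ps-constr3}) over $k$ and compare with the two equalities), then reduce to a linear program over a simplex whose vertices are unit vectors. Your treatment is in fact more complete than the paper's, since you explicitly dispatch the case $\hat{x}_{ii}=0$ and spell out the extreme-point argument that the paper only asserts.
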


\begin{proof}
Consider the LP relaxation of the reformulated subproblem \( PS(\hat{x}, S_i) \) as defined by equations (\ref{psvar-obj}) to (\ref{psvar-constr3}). By subtracting the constraint (\ref{psvar-constr1}) from (\ref{psvar-constr3}), we obtain:
\begin{align}
    j \in \mathcal{N}: \ \sum_{k \in N_{S_i}} \big(y_i^k - z_{ij}^k\big) = 0 \label{psvar-constr5}
\end{align}

This equality (\ref{psvar-constr5}) together with constraint (\ref{psvar-constr2}) implies that:
\begin{align}
    k \in N_{S_i}, j \in \mathcal{N}: \  y_i^k = z_{ij}^k \label{yequalz}
\end{align}

This equality simplifies the LP relaxation of the subproblem to the following:
\begin{align}
    \min & \quad Q_{S_i}(\hat{x}) = \sum_{k \in N_{S_i}} \Big(d_{kv} y_i^k + \sum_{j \in N_{S_i}} d_{kj}y_i^k\Big) - g_{S_i}(\hat{x}) \label{psr-obj} \\
    \text{s.t.} & \quad  \sum_{k \in N_{S_i}} y_i^k = 1 \label{psr-constr1} \\
    & \quad k \in N_{S_i}: \  0 \leq y_i^k \leq 1 \label{psr-constr2}
\end{align}

This LP relaxation retains the integrality property, meaning that the optimal solution to the LP relaxation is identical to that of the original integer programming problem. 
\end{proof}

Thus, we can remove the integrality requirement on the variables in $PS(\hat{x},S_i)$ and form its dual linear programming problem. For this, let $\alpha$, $\beta$, $\gamma$ and $\lambda$ be the dual variables corresponding to constraints (\ref{ps-constr1}) to (\ref{ps-constr4}), respectively. The dual of $PS(\hat{x}, S_i)$ can then be stated as 
\begin{align}
    DS(\hat{x}, S_i): \quad \max &  \quad \alpha \hat{x}_{ii} + \sum_{k\neq i} \beta_k \hat{x}_{ik} + \sum_j \lambda_j \hat{x}_{ij} - g_{S_i}(\hat{x}) \label{ds-obj}\\
     \text{s.t.} & \quad  k \in \mathcal{N}: \  \alpha + \beta_k + \sum_j \gamma_{kj} \leq d_{kv}, \quad \forall k. \label{ds-constr1}\\
    &  \quad k,j \in \mathcal{N}: \  \lambda_j - \gamma_{kj} \leq d_{kj}. \label{ds-constr2} \\
    & \quad \beta, \gamma \leq 0. \label{ds-constr3}
\end{align}

By duality, one optimal solution for $DS(\hat{x}, S_i)$ is $(\alpha, \beta,\gamma,\lambda)=(\alpha,0,0,\lambda)$ with $\alpha=d_{kv}$ and $\lambda_j=d_{kj}$ for all $j$, where $k$ is the chosen backup node for leader node $i$. Then Benders cut is obtained as
\begin{align}
    \theta \geq \sum_{i=1}^n p_{S_i}\Big(\alpha x_{ii} + \sum_j \lambda_j x_{ij} - g_{S_i}(x)\Big) \geq \sum_{i=1}^n p_{S_i}\Big(d_{kv} x_{ii} + \sum_j d_{kj} x_{ij} - g_{S_i}(x)\Big) \label{newcut}
\end{align}

\subsection{Efficient backup node selection as views proceed}

In Section~\ref{sec:bsvp}, the view change process is handled by repeatedly solving (\ref{eq:rvco-obj})–(\ref{eq:rvco-constr5}) whenever a leader failure occurs. Although this approach guarantees optimal reconfiguration decisions, it incurs significant computational overhead when views proceed, particularly in parallel BFT systems where multiple committees may experience successive leader failures. Re-solving a MIP optimization problem at each view change is costly and limits the system’s ability to react promptly to failures.

Leveraging the results identified in Sections~\ref{sec:essp} and~\ref{sec:ibc}, we now present an efficient backup node selection algorithm that operates as views proceed. The key observation is that once the normal-case committee configuration \(x\) has been fixed by the master problem, subsequent view changes do not require re-optimizing \(x\). For any leader failure scenario \(S_i\), the corresponding subproblem reduces to a 1-median problem over the follower node set \(N_{S_i}\) of the corresponding failed leader node \(i\). These properties allow the backup leader to be selected through minimizing \(exd(k)\) rather than by solving a MIP. 
Also, equality (\ref{yequalz}) in the Proposition \ref{integrality} implies the follower reassignment variables $z_{ij}^k$ are determined by the leader-selection variables $y_i^k$ for follower nodes \(j\) in the node set \(N_{S_i}\). Consequently, once a backup leader $k$ is chosen (i.e., $y_i^k = 1$), the values of all $z_{ij}^k$ are fixed. Therefore, the reassignment does not require solving a MIP (\ref{eq:rvco-obj})–(\ref{eq:rvco-constr5}) to determine the $z$ variables.

Specifically, when a leader \(i\) fails in the current view, only nodes in the follower set
\(N_{S_i} = \{ j \in \mathcal{N} : x_{ij} = 1 \}\) are eligible to serve as replacement leaders. For each candidate node \(k \in N_{S_i}\), the additional delay incurred if \(k\) is selected as leader is given by
\[
exd(k) = d_{kv} + \sum_{j \in N_{S_i}} d_{kj}.
\]
The backup leader is then chosen as the node \(k^* = \arg\min_{k \in N_{S_i}} exd(k)\). Once \(k^*\) is selected, all followers of the failed leader \(i\) are reassigned to \(k^*\), and the leader--follower configuration is updated accordingly. For later view changes, the same efficient selection procedure given the updated configuration \(x\) is applied without revisiting the master problem.

The efficient backup leader selection procedure as views proceed is summarized in Algorithm~\ref{alg:efficient-backup}.

\begin{algorithm}[h]
\caption{Efficient backup leader selection algorithm as views proceed}
\label{alg:efficient-backup}
\begin{algorithmic}[1]
\For{each view $t$}
    \State Initialize the leader--follower configuration for view $t$:
    \State Set $x_{ii}=1$ and $x_{ij}=1$ if node $i$ is the leader and node $j$ is its follower.
    \Statex \hspace{1.4em}Otherwise set $x_{ii}=0$ and $x_{ij}=0$.
    \If{leader $i$ fails}
        \State Identify the follower set $N_{S_i} = \{ j : x_{ij} = 1 \}$.
        \For{each candidate $k \in N_{S_i}$}
            \State Compute $exd(k) = d_{kv} + \sum_{j \in N_{S_i}} d_{kj}$.
        \EndFor
        \State Select backup leader $k^* = \arg\min_{k \in N_{S_i}} exd(k)$.
        \State Select $k^*$ as the new leader: set $x_{k^*k^*} \gets 1$, $x_{ii} \gets 0$.
        \For{each follower $j \in N_{S_i}$}
            \State Reassign follower $j$ to $k^*$: set $x_{k^*j} \gets 1$, $x_{ij} \gets 0$.
        \EndFor
    \EndIf
    \State Proceed to the next view $t \gets t+1$.
\EndFor
\end{algorithmic}
\end{algorithm}

\section{Performance evaluation}
\label{sec:perform-eval}
We aim to show the performance enhancement achieved by our proposed VCO-driven parallel BFT algorithm and assess its robustness. 
To achieve rigorous evaluation results, we evaluate our algorithms in a real-world Cloud network environment (MS Azure VMs).
We extended the Java language testbed developed in \citep{chen2023parbft} to support our proposed parallel consensus architectures. Each node in the testbed can switch between consensus node and verification node according to the committee configuration scheme that specifies the assigned committee and the role of each node. 
In terms of performance measures, we measure the latency (in milliseconds, ms) as the time spent from sending a group of client requests (i.e.\ transactions) to accepting valid replies for all requests in the group at the client. We also measure the throughput as the number of operations processed every second in the system, represented by transactions per second, e.g.\ $op/s$.

\subsection{Experiment setup}

To benchmark the performance of VCO, we compare it with ParBFT \citep{chen2023parbft}, which is based on a BL-MILP optimization model, and non-optimized ParBFT\@. 
Based on ParBFT's initial configuration, we set the block size to 1 MB, and the message payload to 250 bytes in our experiments.
This is a common set-up in blockchain applications \citep{kotla2007zyzzyva,yin2019hotstuff,gueta2019sbft}. 
We set up five monitor cloud VMs on Microsoft Azure for running the ParBFT testbed: 1) we deploy nodes in the verification committee in one of the monitor cloud VMs, and 2) we deploy all the remaining nodes in consensus committees in the four remaining monitor cloud VMs. All VMs in the evaluation test use the same hardware: 8 v-CPUs and 32G RAM. 

We assume that each committee can tolerate $f_{\min}=1$ faulty node, which means the minimum number of nodes allocated in each committee should be no less than 4 nodes. We vary the total number of nodes in the consensus committee node set $\mathcal{N}$ from 40 to 200. The configuration of ParBFT is optimized by a BL-MILP model.
In the set-up of the non-optimized ParBFT,  we use the default committee partitioning scheme (i.e.\  at least four nodes randomly allocated to each committee), in which each committee has no more than $f_{\min}=1$ faulty node.
Provided that the minimum security conditions are satisfied, the number of committees and node allocation are set randomly in the non-optimized ParBFT.

\subsection{Performance comparison of parallel BFT}

In this section, we compare several BFT protocols under normal case operations. We compare the VCO-driven ParBFT algorithm to the baseline ParBFT configuration that uses a random committee organization scheme. This baseline serves as a reference for evaluating the impact of our optimization approach. The random organization scheme refers to a typical non-optimized setup in which committee formation and leader election follows no specific strategy.
we also compare VCO-driven ParBFT with HiBFT \citep{thai2019hierarchical}, a parallel Byzantine consensus protocol designed with a hierarchical structure. 
In addition, we evaluate the performance of VCO-driven ParBFT against ParBFT optimized by the BL-MILP model. Comparison are also conducted with HotStuff \citep{yin2019hotstuff}, a widely adopted benchmark protocol that employs a single-committee design and does not incorporate parallelization. We also include PrestigeBFT \citep{zhang2024prestigebft}, which incorporate reputation mechanism for leader selection and active view change for single-committee BFT algorithms.

\begin{figure*}[!htb]
    \centering
    \begin{subfigure}[b]{0.48\textwidth}
        \centering
        \includegraphics[width=\textwidth]{ 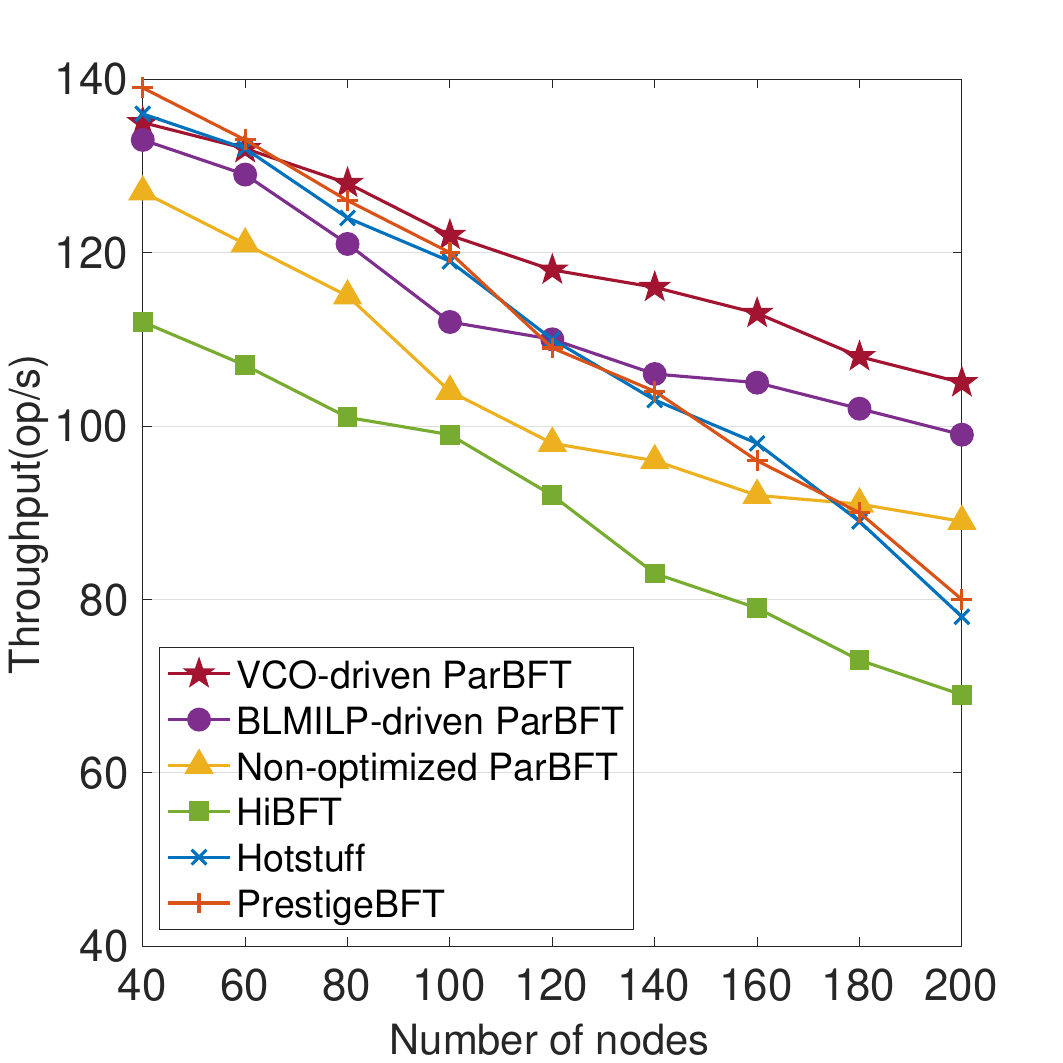}
        \caption{Throughput comparison}
        \label{subfig1}
    \end{subfigure}
    \begin{subfigure}[b]{0.48\textwidth}
        \centering
        \includegraphics[width=\textwidth]{ 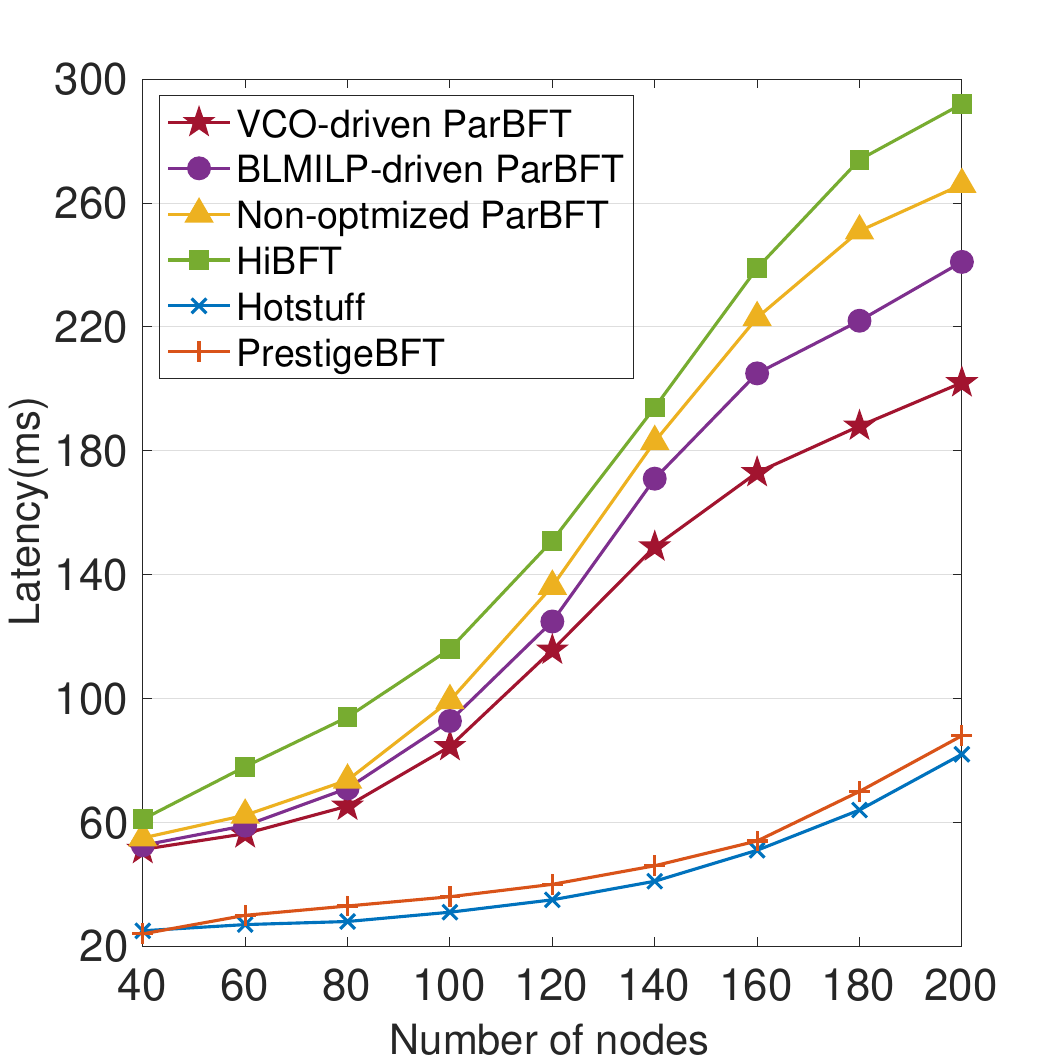}
        \caption{Latency comparison}
        \label{subfig2}
    \end{subfigure}
    \caption{Performance comparison of parallel BFT algorithms}
    \label{fig2}
\end{figure*}

Figure \ref{subfig1} presents the throughput comparison for ParBFT driven by our proposed VCO model with other parallel BFT algorithms, as the number of nodes increases from 40 to 200. Both the VCO-driven ParBFT and BL-MILP-driven ParBFT exhibit notable improvements in throughput compared to the non-optimized ParBFT configuration. Among the algorithms, VCO-driven ParBFT demonstrates the best performance, achieving an approximately 17.9\% improvement in throughput over the non-optimized ParBFT. As the number of nodes reaches 200, VCO-driven ParBFT further improves system throughput by about 6.1 \% compared to the BL-MILP-based ParBFT, highlighting the effectiveness of VCO in optimizing consensus performance under normal case operations.

Figure \ref{subfig2} illustrates the latency comparison for the same BFT algorithms, with node counts ranging from 40 to 200. VCO-driven ParBFT consistently achieves the lowest latency among the three algorithms. When the number of nodes reaches 200, ParBFT with VCO reduces latency by 24.1\% compared to the non-optimized ParBFT configuration. Overall, our proposed VCO model shows better performance in terms of both throughput and latency under normal case operation according to the experimental results.

\subsection{Performance under failure case}
We conducted experiments to evaluate the performance of VCO under failure conditions. To simulate a realistic failure environment, we assigned a subset of nodes a high probability of encountering extremely long message delays. These conditions were designed to represent crash failures and Byzantine failures, allowing us to test the performance of the ParBFT protocol under failure case and assess the effectiveness of VCO. The experiments compared the performance of VCO across four different configurations: VCO-driven ParBFT, SP-driven ParBFT, ParBFT with FD (Failure Detection), and non-optimized ParBFT.

The SP-driven ParBFT configuration utilizes a SP approach within the ParBFT framework \cite{xie2025stochastic}. 
We also examined ParBFT with FD, which integrates the standard ParBFT protocol with an enhanced Failure Detection mechanism \citep{er2022data}. This FD mechanism is crucial for  detecting crash failures by leveraging historical failure data maintained by the internal failure handler in ParBFT \citep{chandra1996unreliable}. The inclusion of FD is intended to reduce consensus delays caused by potential crashes, thereby improving overall system resilience.
As a baseline, we included a non-optimized ParBFT. This configuration serves as a performance benchmark, illustrating how the protocol behaves under failure conditions when no strategy are applied. 

\begin{figure*}[!htb]
    \centering
    \begin{subfigure}[b]{0.48\textwidth}
        \centering
        \includegraphics[width=\textwidth]{ 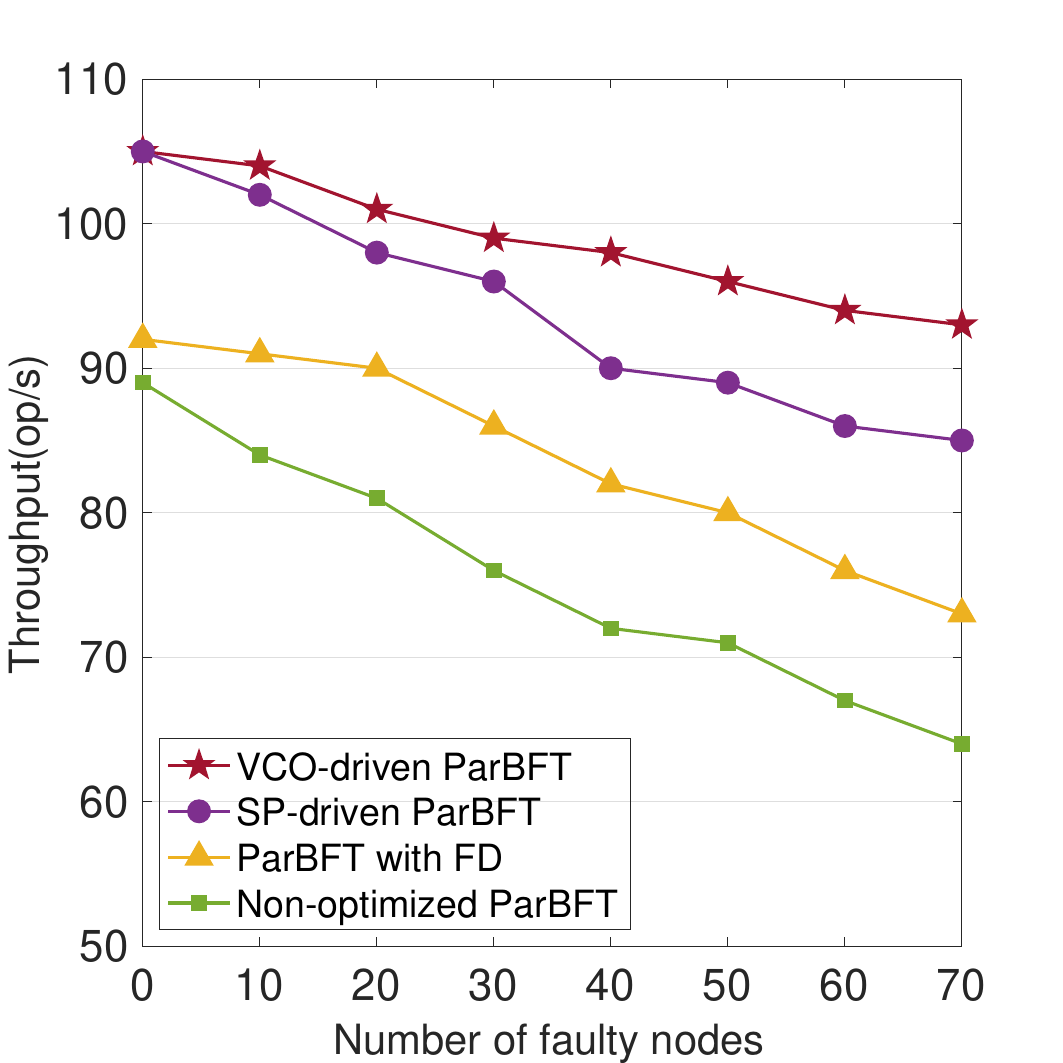}
        \caption{Throughput comparison}
        \label{subfig3}
    \end{subfigure}
    \begin{subfigure}[b]{0.48\textwidth}
        \centering
        \includegraphics[width=\textwidth]{ 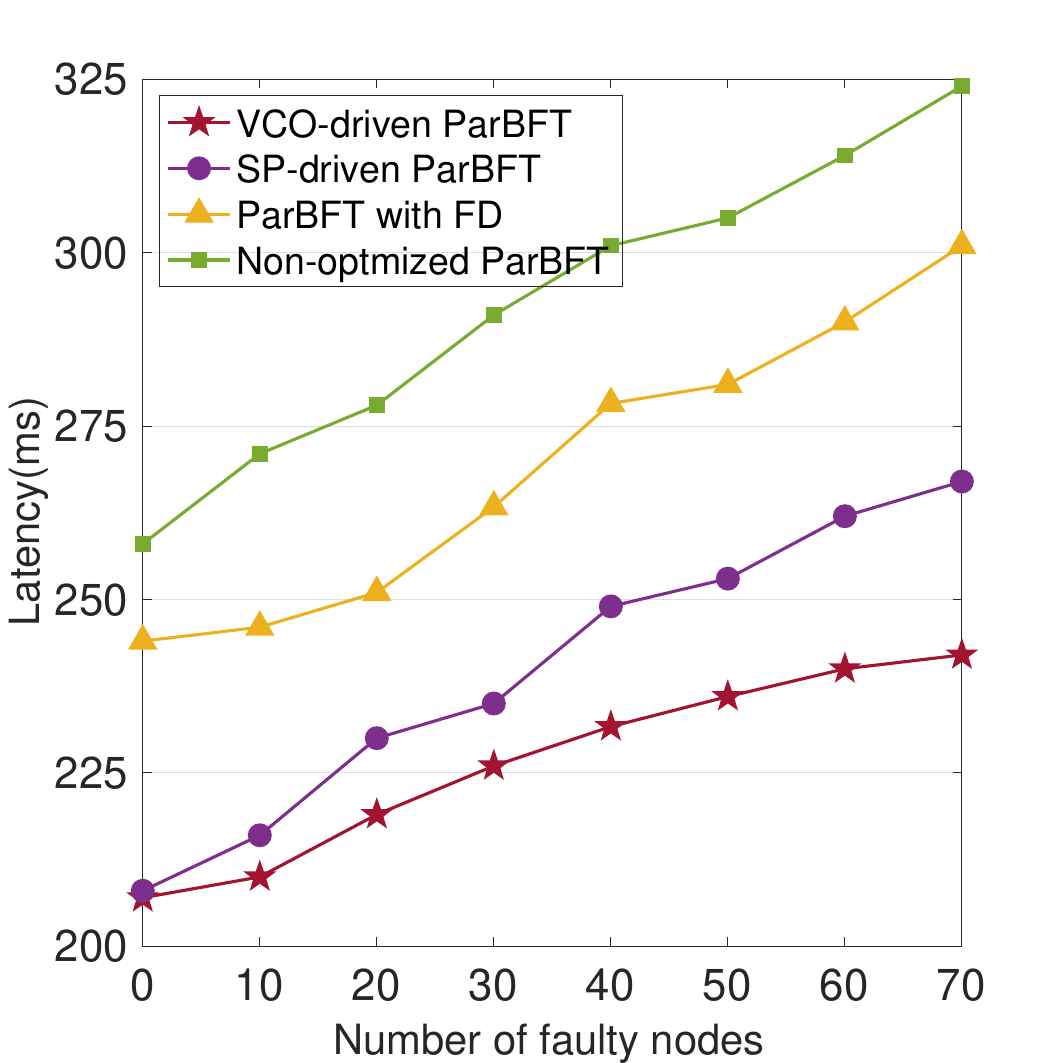}
        \caption{Latency comparison}
        \label{subfig4}
    \end{subfigure}
    \caption{Performance comparison of ParBFT algorithms under failure condition}
    \label{fig3}
\end{figure*}

Figure \ref{fig3} presents a comparison of throughput and latency between ParBFT with different configurations methods. As shown in Figure \ref{subfig3}, ParBFT with VCO achieves the highest throughput. When the number of faulty nodes increases from 0 to 70 out of a total of 200 nodes, ParBFT with VCO maintains both the highest throughput and the lowest latency. The improvements introduced by the VCO model compared with non-optimized ParBFT become increasingly significant as the number of faulty nodes grows. For instance, with 10 faulty nodes, throughput improves by 23.8\%, and with 70 faulty nodes, the improvement reaches 45.3\%. In terms of both throughput and latency, VCO-driven ParBFT outperforms configurations optimized by other models, such as SP and FD. Notably, as the proportion of faulty nodes increases, the performance of VCO-driven ParBFT degrades more slowly compared to the other configurations. This indicates that the VCO model provides superior resilience, particularly in scenarios with a higher number of faulty nodes, compared to models using FD or random view change reconfigurations.

\subsection{Performance comparison with various message sizes}

In this subsection, we compare the performance of VCO-driven ParBFT against BL-MILP-driven ParBFT and non-optimized ParBFT across different message sizes to assess how each model performs in varying operational environments. By varying the message size from 0 bytes to 1MB, we highlight the efficiency of VCO-driven ParBFT across a range of network operation conditions, showing that it achieves superior throughput and lower latency in comparison to both BL-MILP-driven and non-optimized ParBFT configurations.

\begin{figure*}[!htb]
    \centering
    \begin{subfigure}[b]{0.48\textwidth}
        \centering
        \includegraphics[width=\textwidth]{ 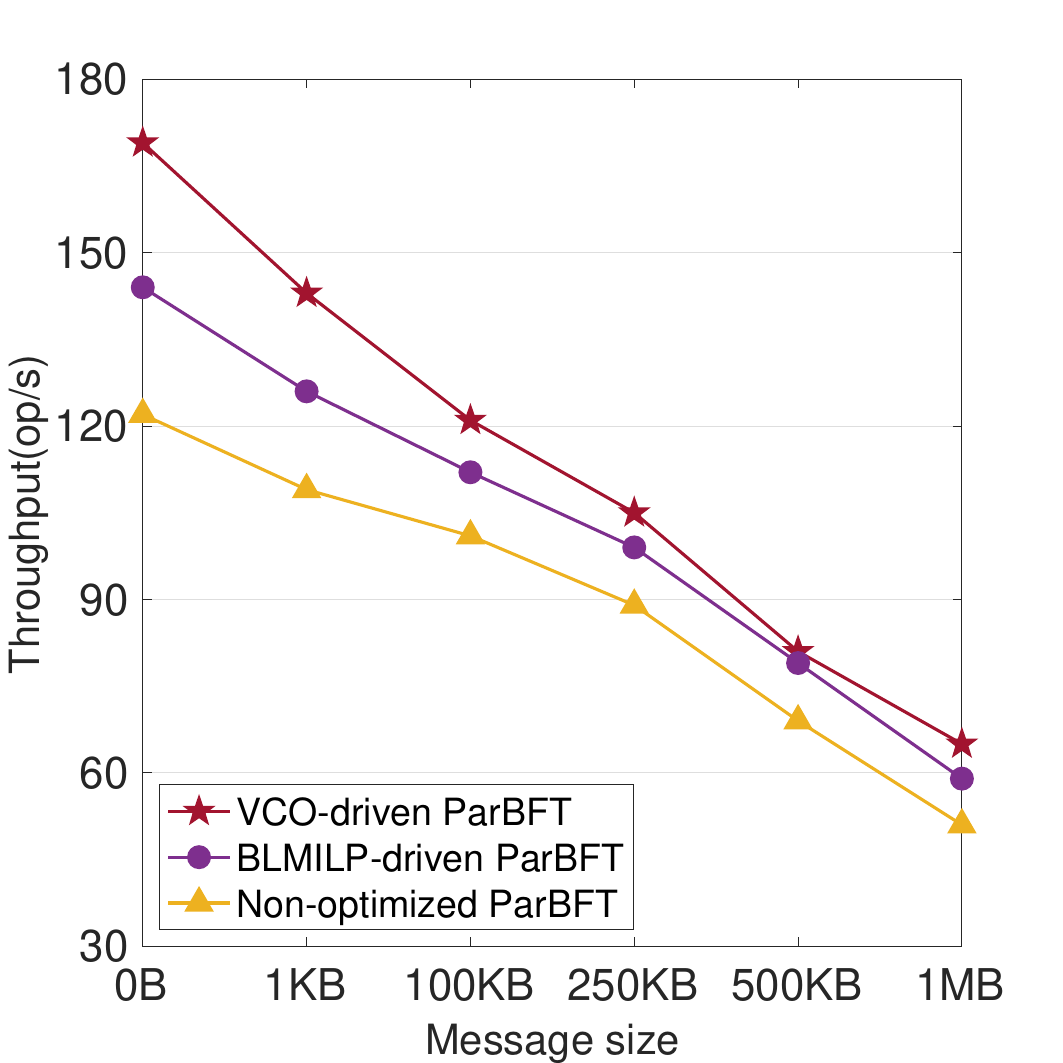}
        \caption{Throughput comparison}
        \label{subfig5}
    \end{subfigure}
    \begin{subfigure}[b]{0.48\textwidth}
        \centering
        \includegraphics[width=\textwidth]{ 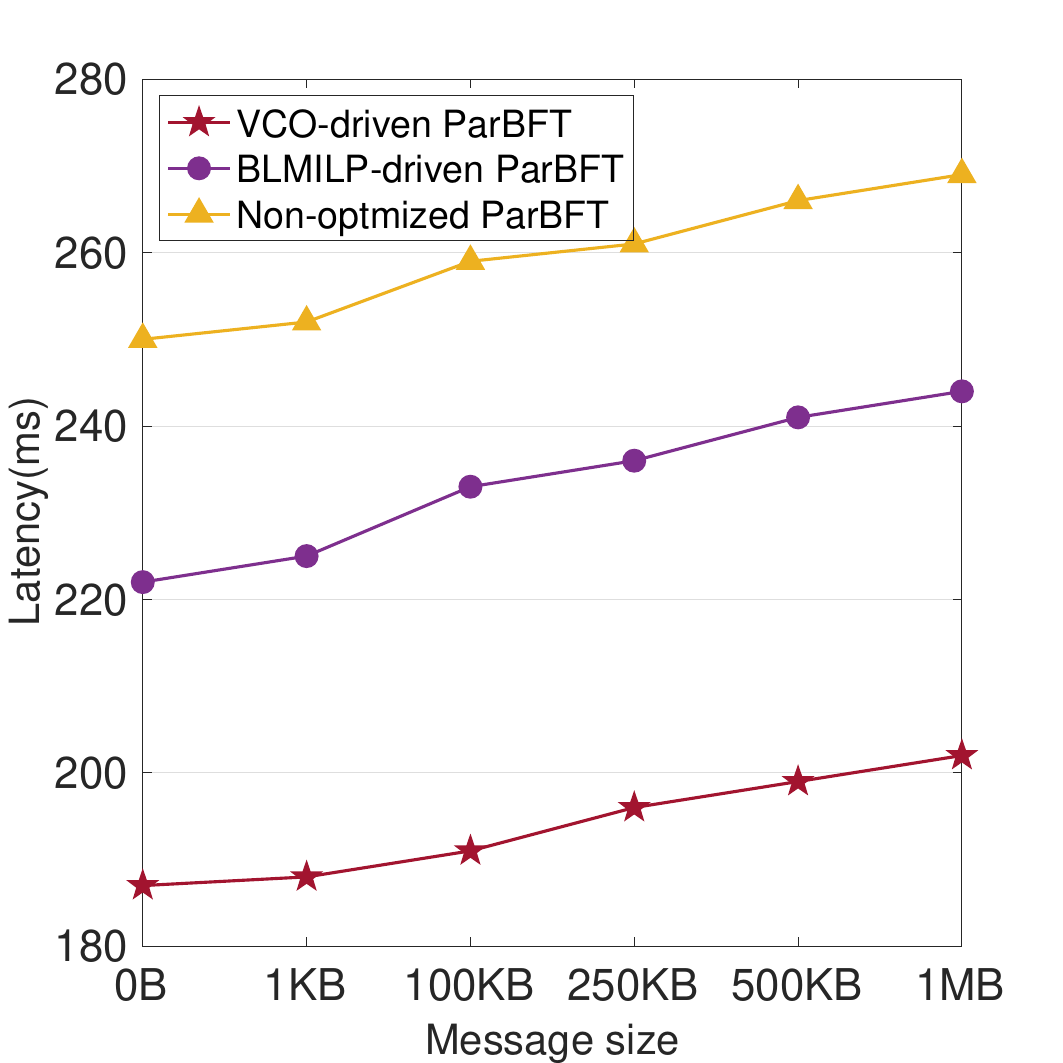}
        \caption{Latency comparison}
        \label{subfig6}
    \end{subfigure}
    \caption{Performance comparison of ParBFT algorithms with various message sizes}
    \label{fig4}
\end{figure*}

Figure \ref{subfig5} illustrates the throughput comparison for the three configurations. VCO-driven ParBFT outperforms both BL-MILP-driven and non-optimized ParBFT across all message sizes. At smaller message sizes (e.g., 1KB), VCO-driven ParBFT achieves the highest throughput. As message size increases, throughput decreases for all configurations due to the additional communication overhead introduced by larger messages. When the message sizes reach 1MB, VCO-driven ParBFT retains superior throughput compared to BL-MILP-driven and non-optimized ParBFT. 

Similarly, the latency comparison shown in Figure \ref{subfig6} shows that VCO-driven ParBFT exhibits the lowest latency across all message sizes. As message size increases, the latency for VCO-driven ParBFT rises more gradually compared to the other configurations. In contrast, BL-MILP-driven ParBFT shows a sharper increase in latency with larger messages, while the non-optimized ParBFT performs the worst, with latency increasing significantly as the message size exceeds 250KB.

\section{Conclusion}
In this paper, we address the inefficiency of the view change process in parallel BFT protocols. We introduce the VCO model to improve performance during both view changes and normal operation while preserving Byzantine safety guarantees. To the best of our knowledge, this is the first work that actively optimizes the view change process in parallel BFT systems, enabling more efficient leader selection and node reallocation under leader failures.
We further propose an iterative algorithm that specifies how backup leaders are selected as views proceed. To solve the VCO model efficiently, we adopt a decomposition-based approach with an efficient subproblem formulation and an improved Benders cut. By combining this efficient solution method with the VCO model, we design an efficient iterative backup leader selection algorithm that operates as views proceed.

Our experimental evaluation demonstrated that the VCO-driven ParBFT outperforms existing BL-MILP and SP optimization models under both normal operation and faulty condition. As network sizes increased and faulty nodes were introduced, the VCO model maintained superior performance, showing that it is well-suited for failure-resilient distributed systems. The results show that VCO for parallel BFT can significantly improve system efficiency and robustness.

In terms of future directions, the VCO model can be extended to explore its adaptability across a broader range of consensus algorithms beyond BFT. Additionally, integrating predictive techniques, such as machine learning-based failure prediction, could further enhance the model’s capacity to  manage leader changes and node reallocation.  By extending these aspects, the VCO model can serve as a foundation for the development of more resilient and scalable parallel consensus mechanisms, applicable to a wide variety of distributed systems, ranging from high-performance enterprise environments to more decentralized applications.

\bibliographystyle{unsrtnat}
\bibliography{mainref}

\end{document}